\def\BibTeX{{\rm B\kern-.05em{\sc i\kern-.025em b}\kern-.08em
    T\kern-.1667em\lower.7ex\hbox{E}\kern-.125emX}}
\begin{document}

\title{A Genetic Algorithm based Superdirective Beamforming Method under Excitation \\Power Range Constraints \\
{\footnotesize}

}

\author{Jingcheng Xie,
Haifan Yin,~\IEEEmembership{Member,~IEEE,}
and Liangcheng Han
\thanks{Jingcheng Xie, Haifan Yin and Liangcheng Han are with Huazhong University of Science and Technology,
Wuhan 430074, China (e-mail: xiejc@hust.edu.cn; yin@hust.edu.cn; hanlc@hust.edu.cn). }
}
\maketitle

\begin{abstract}

The array gain of a superdirective antenna array can be proportional to the square of the number of antennas. However, the realization of the so-called superdirectivity entails accurate calculation and application of the excitations. Moreover, the excitations require a large dynamic power range, especially when the antenna spacing is smaller. In this paper, we derive the closed-form solution for the beamforming vector to achieve superdirectivity. We show that the solution only relies on the data of the array electric field, which is available in measurements or simulations. In order to alleviate the high requirement of the power range, we propose a genetic algorithm based approach with a certain excitation range constraint. Full-wave electromagnetic simulations show that compared with the traditional beamforming method, our proposed method achieves greater directivity and narrower beamwidth with the given range constraints.
\end{abstract}

\begin{IEEEkeywords}
superdirectivity, beamforming, excitation range constraint, spherical wave expansion, genetic algorithm
\end{IEEEkeywords}

\section{Introduction}
As one of the enabling technologies of the fifth generation mobile communication (5G), massive multiple-input multiple-output (MIMO)  is the key to boost the spectral efficiency with a large number of antennas at the base station \cite{Marzetta2010TWC,Yin2013JSAC}. However, the increase in the number of antennas often leads to the inevitable rise of the array size since the antenna spacing is generally no less than half a wavelength. In recent years, with the greater demand for the spectral efficiency, researchers begin to discuss the possibility of super dense antenna arrays \cite{Pizzo2020JSAC}\cite{Pizzo2022TWC}. In this scenario, the mutual coupling between array elements is no longer negligible yet helpful. Uzkov has proved in \cite{Uzkov} that the directivity of a linear array with $M$ isotropic antennas can reach $M^2$ as the spacing between antennas tends to zero. Hence, for base stations with many antennas, the array gain can be more significant \cite{Marzetta_superdirective}.\par
Despite the potential improvement of the array gain, the precise calculation of the superdirective beamforming vector is a challenging problem. The work \cite{Altshuler2005TAP} derives the beamforming vector and measures the directivity of a two-element array. However, the derivation ignores the field distortion caused by the strong mutual coupling. The author of \cite{Clemente2015TAP} designs a four-element parasitic superdirective array. The beamforming vector is calculated using the spherical wave expansion (SWE), which may lead to high calculation complexity as the number of antennas increases. A prototype of the superdirective antenna array based on impedance coupling and field coupling \cite{Han2022ICC} is built and measured in \cite{Han2023JSAC}, whose beamforming vector is corrected by the coupling matrix that still need to be calculated in the measurement. Furthermore, the work \cite{Altshuler2005TAP} shows that the required amplitude range of the beamforming vector increases as the antenna spacing decreases and the number of antennas grows. It shows another practical challenge that the wide range of the amplitude of the beamforming vector usually exceeds the linear range of the power amplifiers, which undermines the practical value of superdirectivity. To our best knowledge, this problem has not been addressed so far in the open literature.\par
In this paper, we first derive the beamforming vector starting from SWE and obtain a more concise closed-form solution which is only related to the electric field. Moreover, based on the derived solution, we propose an approach utilizing the idea of genetic algorithm (GA) in order to alleviate the problem of the wide power range requirement for the beamforming vector. 
 GA is widely used in antenna array optimization \cite{GA_intro,GA_superdirective2019}. However, this paper is the first to ultilize the idea to obtain the beamforming vector with excitation range constraints. Finally, the results are simulated under full-wave electromagetic simulations. The results show that, compared with the traditional method, our proposed method achieves greater directivity and narrower beamwidth with the given range constraints.
 
\section{Derivation of the superdirective beamformer}
In this section, we derive the beamforming vector for superdirective arrays under the framework of spherical wave expansion. The SWE is firstly introduced by Hansen \cite{SWE_first} to generate solutions to the vector wave equation. Then a detailed formulation and derivation is given by Stratton \cite{SWE_formulation}. It can decompose the electromagnetic field into a series of orthogonal spherical wave basis. In this method, the electric field can be expanded in the spherical coordinates $(r,\theta ,\phi )$ as \cite{SWE_derivation}
\begin{equation}
    \vec E (r,\theta ,\phi ) = \frac{k}{\sqrt \eta  }\sum_{s=1}^{2} \sum_{n=1}^{\infty }\sum_{m=-n}^{n} Q_{smn}\vec{F}_{smn}^{(3)}(r,\theta,\phi) , 
\end{equation}
where $\eta$ is the medium intrinsic impedance, $k$ is the wavenumber. $Q_{smn}$ is the spherical wave coefficient, and $\vec{F}_{smn}^{(3)}(r,\theta,\phi)$ is the wave function, where $s,m,n$ denote the wave modes.\par
In the far-field region, as $kr\to \infty$, the electric field $\Vec{E}$ can be simplified to \par
\begin{equation}
      \vec E (r,\theta ,\phi ) \to  \frac{k}{\sqrt \eta  }\frac{e^{ikr}}{\sqrt[]{4\pi}kr}\sum_{s=1}^{2}\sum_{n=1}^{\infty } \sum_{m=-n}^{n} Q_{smn}\vec{K}_{smn}(\theta,\phi),
\end{equation}
where $\vec{K}_{smn}(\theta,\phi)=\lim_{kr \to \infty}[\sqrt[]{4\pi}\frac{kr}{e^{ikr}}\vec{F}_{smn}^{(3)}(r,\theta,\phi)]$ are the far-field pattern functions. Their explicit expressions are
\begin{equation}
\begin{split}
     \vec{K}_{1mn}(\theta,\phi)&=\sqrt{\frac{2}{n(n+1)}}(-\frac{m}{\left | m \right|}) ^{m}e^{jm\phi}(-j)^{n+1} \\
    &\left \{ \frac{jm\bar{P}_{n}^{\left | m \right | }(\cos\theta)}{\sin\theta }\hat{\theta }-\frac{\mathrm{d}\bar{P}_{n}^{\left | m \right | }(\cos\theta)}{\mathrm{d}\theta }\hat{\phi }  \right \},
\end{split}
\end{equation}
\begin{equation}
\begin{split}
     \vec{K}_{2mn}(\theta,\phi)&=\sqrt{\frac{2}{n(n+1)}}(-\frac{m}{\left | m \right|}) ^{m}e^{jm\phi}(-j)^{n} \\
    &\left \{ \frac{\mathrm{d}\bar{P}_{n}^{\left | m \right | }(\cos\theta)}{\mathrm{d}\theta}\hat{\theta }+\frac{jm\bar{P}_{n}^{\left | m \right | }(\cos\theta)}{\sin\theta }\hat{\phi }  \right \},
\end{split}
\end{equation}
where $\bar{P}_{n}^{\left | m \right | }$ is the associated normalized Legendre function. Finally, the SWE of the electric field in the far-field region can be represented as \cite{SWE_matrix}
\begin{equation}
    \vec{E}(\theta,\phi)=k\sqrt[]{\eta }\sum_{s=1}^{2}\sum_{n=1}^{\infty }\sum_{m=-n}^{n}Q_{smn}\vec{K}_{smn}(\theta ,\phi )  . 
\end{equation}\par
For a certain antenna, the power radiated per unit solid angle in the given direction is defined as 
\begin{equation}
\begin{split}
    P_{A}(\theta,\phi)&=\frac{1}{2}r^{2}\eta\left | \vec{E}(r,\theta ,\phi )  \right |^{2}\\
    &=\frac{1}{2}\frac{1}{4\pi}\left | \sum_{s=1}^{2}\sum_{n=1}^{\infty }\sum_{m=-n}^{n}Q_{smn}\vec{K}_{smn}(\theta ,\phi ) \right |^{2}.
\end{split}
\end{equation}
As for an isotropical radiator, the power radiated per unit solid angle is equal to the total radiated power divided by $4\pi$, namely
\begin{equation}
    P_{i}=\frac{P_{total}}{4\pi}=\frac{1}{2}\frac{1}{4\pi}\left | \sum_{s=1}^{2}\sum_{n=1}^{\infty }\sum_{m=-n}^{n}Q_{smn} \right |^{2} .
\end{equation}
Therefore, the directivity of the antenna in the given direction $(\theta,\phi)$ is defined as
\begin{equation}
\begin{split}
    &\quad\quad\quad\quad\quad D(\theta,\phi)=\frac{P_{A}(\theta,\phi)}{P_{i}}\\
    &=\frac{\left | \sum_{s=1}^{2}\sum_{n=1}^{\infty }\sum_{m=-n}^{n}Q_{smn}\vec{K}_{smn}(\theta ,\phi ) \right |^{2} }{ \sum_{s=1}^{2}\sum_{n=1}^{\infty }\sum_{m=-n}^{n}\left |Q_{smn} \right |^{2}}   .
\end{split}
\end{equation}\par
The maximization of (8) can be obtain by applying the Cauchy-Schwartz inequality \cite{Clemente2015TAP}. However, the calculation of $\vec{K}_{smn}(\theta ,\phi )$ and $Q_{smn}$ is very complicated. Starting from this expression, we derive a more concise expression of directivity and solution to the beamforming vector. For an antenna array with $M$ elements, let $\mathbf{b}=[b_{1},b_{2},\cdots,b_{M}]^{\mathrm{T}}\in \mathbb{C}^{M\times 1}$ denotes the beamforming vector, where $b_{i},i=1,2,\cdots,M$ represents the excitation coefficent on the $i$-th antenna. $\mathbf{E}=[\mathbf{e} _{1},\mathbf{e} _{2},\cdots,\mathbf{e} _{M}]\in \mathbb{C}^{2ql\times M}$ represents the electric field in the quantified angle of each antenna, in which $q$ and $l$ denote the discrete angles in the $\theta$ and $\phi$ direction. $\mathbf{E}_{\theta_{0},\phi_{0}}=[{E}_{1}(\theta_{0},\phi_{0}), {E}_{2}(\theta_{0},\phi_{0}),\cdots,{E}_{M}(\theta_{0},\phi_{0})]^{\mathrm{T}}$ is the electric field of each antenna in the given direction $(\theta_{0},\phi_{0})$.
\newtheorem{theorem}{Theorem}
\begin{theorem}\label{them1}
    For the antenna array with $M$ elements, the superdirective beamforming vector that achieves the maximum directivity in the given direction $(\theta_{0},\phi_{0})$ can be obtained by eigenvector decomposition of the following matrix
    \begin{equation}
    (\mathbf{E}^{\mathrm{H}}\mathbf{E})^{-1}(\mathbf{E}_{\theta_{0},\phi_{0}}\mathbf{E}_{\theta_{0},\phi_{0}}^{\mathrm{H}})
\end{equation} 
The corresponding directivity is  
\begin{equation}
    D(\theta_{0},\phi_{0})=\frac{\mathbf{b}^{\mathrm{T}}\mathbf{E}_{\theta_{0},\phi_{0}}\mathbf{E}_{\theta_{0},\phi_{0}}^{\mathrm{H}}\mathbf{b}^{\ast}}{\mathbf{b}^{\mathrm{T}}\mathbf{E}^{\mathrm{H}}\mathbf{E}  \mathbf{b}^{\ast}}\cdot c,
    \end{equation}
    where $c$ is a constant. 
\end{theorem}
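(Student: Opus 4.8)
The plan is to translate the coefficient-level directivity formula (8) into a quadratic form in the beamforming vector $\mathbf{b}$ and the raw field data, and then to maximize the resulting generalized Rayleigh quotient by a dominant-eigenvector argument. The starting observation is linearity: the array far field in any direction is the $\mathbf{b}$-weighted superposition of the element far fields, so in the target direction $\vec{E}(\theta_0,\phi_0)=\mathbf{E}_{\theta_0,\phi_0}^{\mathrm{T}}\mathbf{b}$ up to the common scalar prefactor appearing in (5). Hence the numerator of (8), which equals $|\sum_{smn}Q_{smn}\vec{K}_{smn}(\theta_0,\phi_0)|^2$ and is proportional to $P_A(\theta_0,\phi_0)$, rewrites as $|\mathbf{E}_{\theta_0,\phi_0}^{\mathrm{T}}\mathbf{b}|^2=\mathbf{b}^{\mathrm{T}}\mathbf{E}_{\theta_0,\phi_0}\mathbf{E}_{\theta_0,\phi_0}^{\mathrm{H}}\mathbf{b}^{\ast}$, because $\mathbf{b}^{\mathrm{T}}\mathbf{E}_{\theta_0,\phi_0}$ is a scalar whose conjugate is $\mathbf{E}_{\theta_0,\phi_0}^{\mathrm{H}}\mathbf{b}^{\ast}$.

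For the denominator I would use the orthonormality of the far-field pattern functions $\vec{K}_{smn}$ over the unit sphere. Integrating $|\vec{E}(\theta,\phi)|^2$ against $\mathrm{d}\Omega$ and invoking this orthonormality collapses the double mode sum to $\sum_{smn}|Q_{smn}|^2$ times a fixed constant, so the denominator of (8) is proportional to the total radiated power $\int|\vec{E}(\theta,\phi)|^2\,\mathrm{d}\Omega$. This is the step that eliminates every explicit reference to the coefficients $Q_{smn}$, leaving only field quantities. Discretizing the solid-angle integral over the $2ql$ sampled directions and again substituting the superposition $\vec{E}(\theta,\phi)=[\text{the corresponding row of }\mathbf{E}]\,\mathbf{b}$ turns the integral into $\mathbf{b}^{\mathrm{T}}\mathbf{E}^{\mathrm{H}}\mathbf{E}\,\mathbf{b}^{\ast}$ up to the angular sampling weight.

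Assembling the two pieces and folding all fixed normalization factors (the $\tfrac12\tfrac1{4\pi}$ prefactors, the orthonormality constant, and the sampling weight) into a single constant $c$ gives exactly (10). It remains to maximize this over $\mathbf{b}$. Writing $A=\mathbf{E}_{\theta_0,\phi_0}\mathbf{E}_{\theta_0,\phi_0}^{\mathrm{H}}$, which is Hermitian positive semidefinite of rank one, and $B=\mathbf{E}^{\mathrm{H}}\mathbf{E}$, which is Hermitian positive definite provided $\mathbf{E}$ has full column rank (the very condition that makes the inverse in (9) meaningful), the quotient (10) is $c\,(\mathbf{b}^{\mathrm{T}}A\mathbf{b}^{\ast})/(\mathbf{b}^{\mathrm{T}}B\mathbf{b}^{\ast})$. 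The substitution $\mathbf{x}=\mathbf{b}^{\ast}$ recasts it as the standard quotient $\mathbf{x}^{\mathrm{H}}A\mathbf{x}/\mathbf{x}^{\mathrm{H}}B\mathbf{x}$, whose maximum is the largest generalized eigenvalue of the pencil $(A,B)$ and is attained by the generalized eigenvector solving $A\mathbf{x}=\lambda B\mathbf{x}$, equivalently the dominant eigenvector of $B^{-1}A=(\mathbf{E}^{\mathrm{H}}\mathbf{E})^{-1}(\mathbf{E}_{\theta_0,\phi_0}\mathbf{E}_{\theta_0,\phi_0}^{\mathrm{H}})$. Since $A$ is rank one, this eigenvector is explicitly proportional to $(\mathbf{E}^{\mathrm{H}}\mathbf{E})^{-1}\mathbf{E}_{\theta_0,\phi_0}$.

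The main obstacle I anticipate is the denominator identity rather than the optimization, which is textbook. Converting $\sum_{smn}|Q_{smn}|^2$ into the sampled quadratic form $\mathbf{b}^{\mathrm{T}}\mathbf{E}^{\mathrm{H}}\mathbf{E}\mathbf{b}^{\ast}$ relies on the orthonormality of the $\vec{K}_{smn}$ and on the assumption that the discrete angular grid reproduces the power integral faithfully; the delicate part is tracking the normalization constants so that they genuinely coalesce into the single factor $c$, and confirming that $\mathbf{E}^{\mathrm{H}}\mathbf{E}$ is invertible under the chosen sampling.
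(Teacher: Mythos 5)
Your proposal is correct and follows essentially the same route as the paper's own proof: linearity of the array field in $\mathbf{b}$ gives the rank-one numerator, the orthonormality of the $\vec{K}_{smn}$ (the paper's Eq.~(20)) converts $\sum_{smn}|Q_{smn}|^2$ into the sampled quadratic form $\mathbf{b}^{\mathrm{T}}\mathbf{E}^{\mathrm{H}}\mathbf{E}\mathbf{b}^{\ast}$, and the generalized Rayleigh quotient is maximized via the dominant eigenvector of $(\mathbf{E}^{\mathrm{H}}\mathbf{E})^{-1}(\mathbf{E}_{\theta_{0},\phi_{0}}\mathbf{E}_{\theta_{0},\phi_{0}}^{\mathrm{H}})$. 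The only difference is presentational: the paper routes the argument through explicit $\mathbf{Q}$ and $\mathbf{K}$ matrices and the relation $\mathbf{e}_{i}=k\sqrt{\eta}\,\mathbf{K}\mathbf{Q}_{i}^{\mathrm{T}}$, while you work directly with the field and Parseval, and the discretization/weighting caveat you flag is equally present (and equally glossed over) in the paper.
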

\begin{proof}
    The proof can be found in Appendix \ref{ap1}.
\end{proof}
Theorem \ref{them1} indicates that the solution to the superdirective beamforming vector only relies on the electric field of the array element. Such information can be obtained by simulations or experimental measurements in anechoic chambers.
\section{Superdirectivity with excitation range constraints}
The entries of the obtained superdirective beamforming vector $\mathbf{b}$ generally have a wide range of amplitude, especially when the number of antennas $M$ increases and the antenna spacing decreases. In this section, we propose a solution under a certain range constraint of the amplitude. The problem can be described as
\begin{equation}
\begin{aligned}  
&\max_{\mathbf{b}}\quad 
f(\mathbf{b})=\frac{\mathbf{b}^{\mathrm{T}}\mathbf{E}_{\theta_{0},\phi_{0}}\mathbf{E}_{\theta_{0},\phi_{0}}^{\mathrm{H}}\mathbf{b}^{\ast}}{\mathbf{b}^{\mathrm{T}}\mathbf{E}^{\mathrm{H}}\mathbf{E}  \mathbf{b}^{\ast}}\\
& \begin{array}{r@{\quad}r@{}l@{\quad}l}
s.t.&\quad\frac{\max(\left | b_{i} \right |)}{\min(\left | b_{i} \right |)} &\leq P  &i=1,2,3, \cdots,M,\\
\end{array} 
\end{aligned}
\end{equation}
where $P$ is the given range of amplitude. Without loss of generality, the minimum amplitude is normalized to 1, and the problem can be rewritten as 
\begin{equation}
\begin{aligned}
&\max_{\mathbf{b}}\quad f(\mathbf{b})=\frac{\mathbf{b}^{\mathrm{T}}\mathbf{E}_{\theta_{0},\phi_{0}}\mathbf{E}_{\theta_{0},\phi_{0}}^{\mathrm{H}}\mathbf{b}^{\ast}}{\mathbf{b}^{\mathrm{T}}\mathbf{E}^{\mathrm{H}}\mathbf{E}  \mathbf{b}^{\ast}}\\
& \begin{array}{r@{\quad}r@{}l@{\quad}l}
s.t.&\quad 1\leq \left | b_{i} \right |\leq P  \quad &i=1,2,3, \cdots, M.\\
\end{array} 
\end{aligned}
\end{equation}\par
The objective function and the constraint of the above optimization problem are non-convex and thus it is difficult to solve directly. Therefore, we proposed an approach to solve the above problem with the idea of Genetic Algorithm. The algorithm can be summarized as generating the initial population, calculating the fitness function, selecting the candidates and repoducing by crossover and mutation.\par
First, we randomly generate a set of $I$ beamforming vectors. For each vector $\mathbf{b}$, we choose $f(\mathbf{b})$ as the fitness function and calculate $f(\mathbf{b})$ to form a $I$-dimensional array. Then we sort the array from the largest to the smallest, selecting the beamforming vectors corresponding to the first $m$ of the sorted array. These $m$ vectors are remained for evolution. Noting that each vector has $M$ complex numbers containing the amplitude and phase, we separately encode the amplitude and phase of each complex number. Fig. \ref{fig2} shows the detailed encoding process. Both amplitude and phase are encoded as a binary sequence and then combined as a chromosome. We can control the unit quantization value of the encoding of the amplitude to ensure that the amplitude satisfies the constraint. For instance, in case the excitation range constraint is $P$ and the amplitude is encoded to $x$ bits, the unit quantization value should be $\frac{P-1}{2^x-1}$.\par
After the encoding stage, a population consisting of $m$ initial candidate beamforming vectors are generated and each vector consists of $M$ chromosomes. These candidates are used to reproduce ``children'' by crossover from two randomly selected candidates over each of their $M$ chromosomes. The reproduction procedure mainly includes two genetic operations: crossover and mutation. Parents are randomly picked in the candidate pool and mated. For each of the $M$ chromosomes, two random crossover point are selected. Then the chromosome fragment between the two points in the corresponding chromosome of one parent is swapped into that of the other to generate child chromosome. After the same operation over all $M$ chromosomes of the parent, a child solution is reproduced. For each reproduced chromosome, a mutation process may happen that converts a bit to the opposite one with a very low probability. Fig. \ref{fig3} shows the crossover and mutation process in which a chromosome has 10 bits of amplitude and 8 bits of phase. We repeat the above process until the population increase from $m$ to $I$. 
\begin{figure}[htbp]
\includegraphics[width=5.5cm]{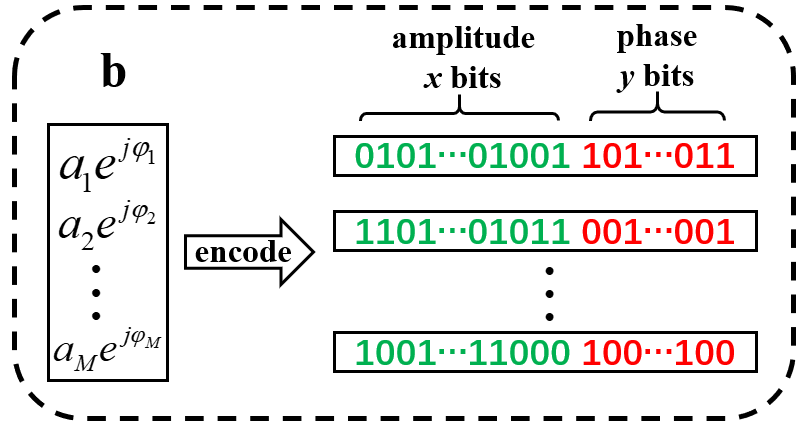}
\centering
\caption{The encoding process of a randomly selected vector $\mathbf{b}$.}
\label{fig2}
\end{figure}
\vspace{-0.5cm}
\begin{figure}[htbp]
\includegraphics[width=6cm]{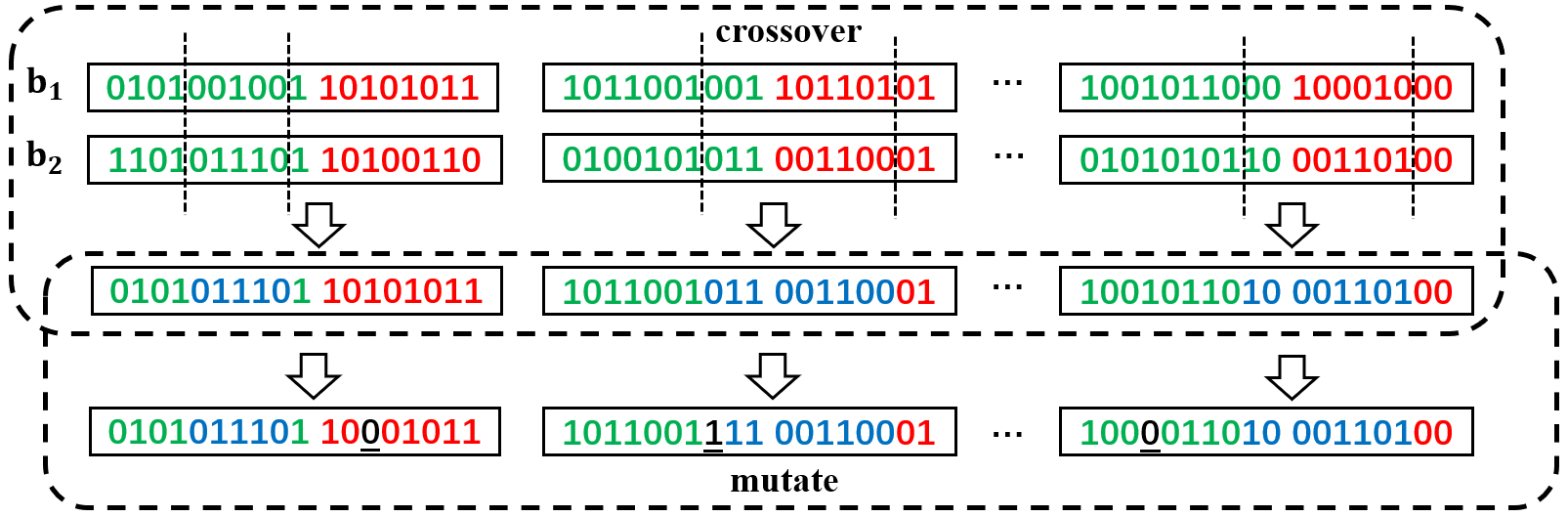}
\centering
\caption{The crossover and mutation of two parents with $M$ chromosomes.}
\label{fig3}
\end{figure}\par
For the reproduced set $I$, the selection and reproduction are repeated until the termination conditions are met. In general, the conditions can be either the biggest $f(\mathbf{b})$ is achieved or there is no further improvement in the successive iteration. Finally, a pseudo code of the proposed method is summarized in Algorithm 1.
\begin{algorithm}
    \renewcommand{\algorithmicrequire}{\textbf{Input:}}
	\renewcommand{\algorithmicensure}{\textbf{Output:}}
    \caption{Proposed GA based Beamforming Algorithm }
    \label{AL1}
    \begin{algorithmic}[1]
        \REQUIRE Excitation range constraint $P$, The randomly generated beamforming vector set $\mathfrak{B} =\left\{\mathbf{b}_{1},\mathbf{b}_{2},\cdots,\mathbf{b}_{I}\right\}$ which meets the constraint.\par
        $x$: Number of amplitude bits\par
        $y$: Number of phase bits\par
        $f(\mathbf{b})$: Fitness function\par
        \WHILE{the maximum number of iterations is not reached or the best fitness $f(\mathbf{b})\le f_{max}$ }
        \STATE (Select)
        \FOR{$\mathbf{b}_{i}\in \mathfrak{B}$}
        \STATE calculate $f(\mathbf{b}_{i})$;
        \ENDFOR
        \STATE select $m$ candidates;
        \STATE (Coding by the constraint)
        \FOR{$\mathbf{b}_{i}\in$ candidates}
        \FOR{$b_{j}^{i}\in \mathbf{b}_{i}$}
        \STATE Encode $b_{j}^{i}$ into a chromosome in which $x$ bits denote amplitude and $y$ bits denote phase. 
        \ENDFOR
        \ENDFOR
        \STATE (Crossover and Mutation)
        \IF{Number of candidates $<I$}
        \STATE Randomly selected two candidates and perform crossover and mutation for all $M$ chromosomes.
        \ENDIF
        \STATE (Decoding)
        \STATE Inverse the coding process;
        \ENDWHILE
        \STATE Select the best candidate;
        \ENSURE The beamforming vector $\mathbf{b}^{*}$.
    \end{algorithmic}
\end{algorithm}
\section{Numerical Results}
To prove the effectiveness of the proposed beamforming algorithm, full-wave simulations are carried out in this section.\par
The simulation is performed at 1.6 $\mathrm{GHz}$ considering two different arrays (four and six identical uniformly spaced electrical dipoles). The designed dipole antenna array is shown in Fig. \ref{fig4}. The array is printed on Rogers RO4003C (lossy) substrate ($\varepsilon_{r}=3.55$, $\mu_{r}=1$, $\tan \delta=0.0027$, the width $L=85.5$ mm and the thickness is $0.813$ mm). The length and width of the dipole antenna are $H=71.48$ mm and $w=1$ mm respectively. The caliber $h$ of the port is $2.54$ mm for connection to a SubMiniature version A (SMA) connector. The distance between two adjacent antennas is set to $0.1\lambda$ or $0.2\lambda$ and the designed end-fire direction is $(\theta_{0}=90^{\circ },\phi_{0}=0)$.\par
Considering the mutual coupling effects between each array element, the electric field $\mathbf{e}_{i}=\begin{bmatrix}
     \bar{E}(\theta_{1},\phi_{1})_{\theta}& \cdots &\bar{E}(\theta_{l},\phi_{q})_{\theta}
    \end{bmatrix}^{T}\in \mathbb{C}^{lq\times 1}$ in (17) of each antenna is simulated, from which the electric field $\mathrm{E}_{\theta_{0},\phi_{0}}$ in the end-fire direction $(\theta_{0}=90^{\circ },\phi_{0}=0)$ is extracted. Then, the simulated complex electric fields is used to calculate the fitness function $f(\mathbf{b})$ (see section \uppercase\expandafter{\romannumeral3}-A). Finally, we apply our proposed algorithm to obtain the beamforming vectors $\mathbf{b}$,  based on which the radiation pattern is simulated. 
 In the simulation, we let a 7-bit binary code denote the amplitude and 0.01, 0.02, 0.03 per unit, respectively. Therefore, we choose $P=2.27$ for 4 antennas with $0.1\lambda$ spacing and $P=2.27$, $3.54$, $4.81$ for 6 antennas with $0.2\lambda$ spacing. To show that our proposed method is efficient with all constraint $P$, the maximum ratio transmission (MRT) and the traditional superdirecitve beamforming that ignores the field distortion due to mutual coupling
 are chosen for comparison.\par

 \begin{figure}[htbp]
\includegraphics[width=6cm]{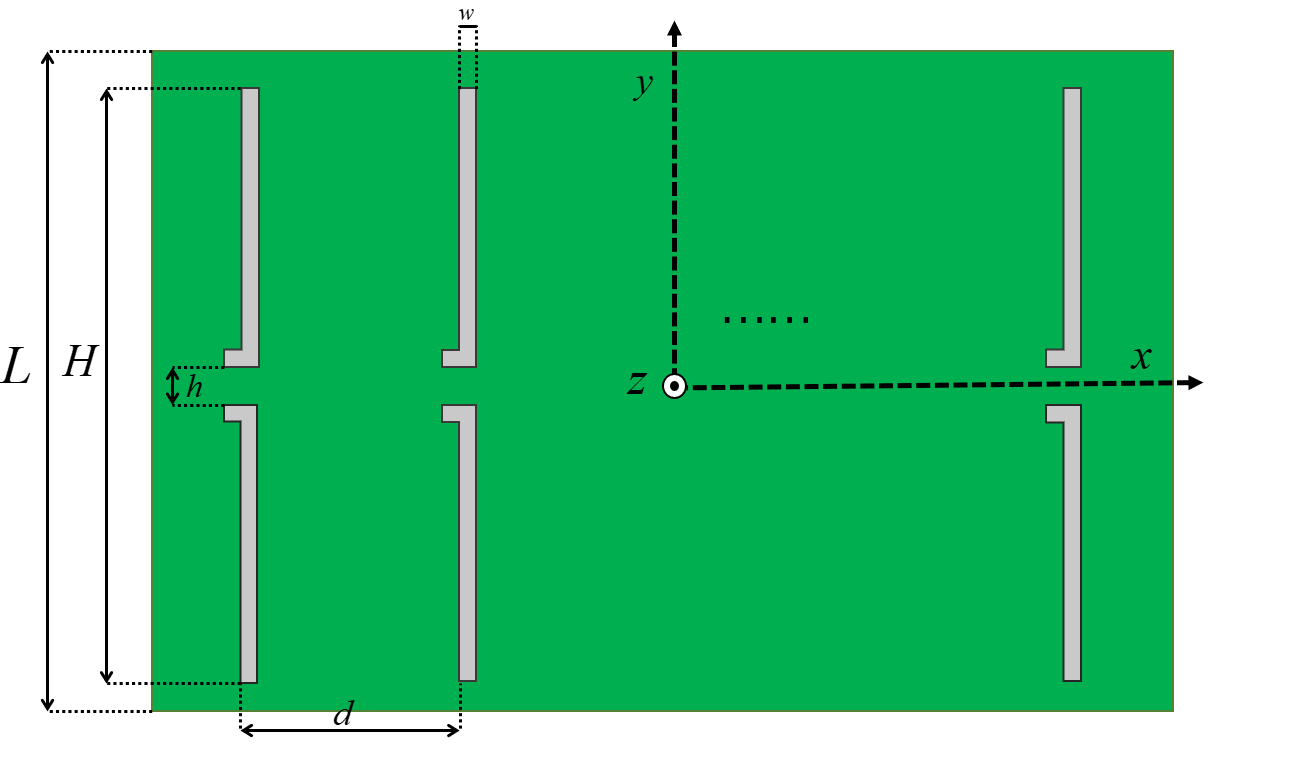}
\centering
\caption{The schematic view of the designed dipole antenna array.}
\label{fig4}
\end{figure}
The simulated result in the E-plane ($\phi=0^{\circ }$) of the 4 antennas is shown in Fig. \ref{fig8}.
The directivity simulated using eigenvalue decomposition in the end-fire direction is 16.45 and the beamwidth is 51$^{\circ}$.
 With the constraint $P=2.27$, our method achieves the directivity of 11.33, which is much higher than 4.45 of the MRT and 6.15 of the traditional method. Moreover, it can be found that the 3-dB beamwidth of our proposed method is 62.7$^{\circ}$, which is narrower than 132.6$^{\circ}$ of the MRT method and 68.4$^{\circ}$ of the traditional method.
\begin{figure}[htbp]
\centering
\includegraphics[width=6.5cm]{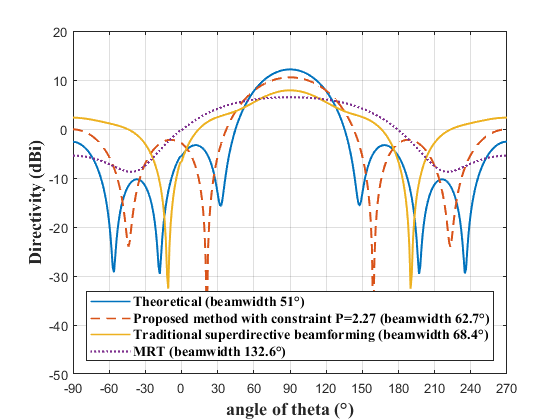}
\caption{The E-plane pattern of the $4$-element array with $0.1\lambda$ spacing.}
\label{fig8}
\end{figure}

Then, we increase the number of antennas to 6 and the spacing is changed to $0.2\lambda$. The simulated directivity pattern in the E-plane ($\phi=0^{\circ }$) is shown in Fig. \ref{fig9}. It is obvious that our proposed method with all given constraints has a better performance in the main-lobe direciton than the traditional method and MRT.
\begin{table}[htbp]
    \centering
    \caption{The simulated results of the 6-element array}
    \label{tab2}
    \resizebox{\linewidth}{!}{
    \begin{tabular}{c|cccccc}
    \hline
     $I=6$ & \multirow{2}{*}{Theoretical} & \multirow{2}{*}{$P=2.27$} & \multirow{2}{*}{$P=3.54$} & \multirow{2}{*}{$P=4.81$} &
     Traditional &
     \multirow{2}{*}{MRT}\\
     $d=0.2\lambda$ &  &  &  &  & method & \\
    \hline
    Directivity & 31.48 & 23.9 & 24.8 & 25.5 & 10.2 & 8.5\\
    \hline
    3-dB beamwidth ($^{\circ}$) & 37.3 & 42.2 & 41.3 & 40.9 & 58.4 & 82.1\\
    \hline
    \end{tabular}
    }
\end{table}
\begin{table}[htbp]
    \centering
    \caption{The simulated results of the 8-element array}
    \label{tab3}
    \resizebox{\linewidth}{!}{
    \begin{tabular}{c|cccccc}
    \hline
     $I=8$ &\multirow{2}{*}{Theoretical}& \multirow{2}{*}{$P=2.27$} & \multirow{2}{*}{$P=3.54$} & \multirow{2}{*}{$P=4.81$} &
     Traditional               & 
     \multirow{2}{*}{MRT}\\
     $d=0.2\lambda$ &  &  &  &  & method &\\
    \hline
    Directivity & 57.23 & 30.05 & 30.09 & 32.00 & 4.02 & 10.71\\
    \hline
    3-dB beamwidth ($^{\circ}$) & 32 & 38.2 & 39.6 & 37.3 & 51.2 &71\\
    \hline
    \end{tabular}
    }
\end{table}
The detailed results are shown in Table \ref{tab2}, where the directivity is obtained in the end-fire direction $(\theta_{0}=90^{\circ },\phi_{0}=0)$. It can be found that our proposed method achieves greater directivity and much narrower beamwidth.

\begin{figure}[htbp]
\centering
\includegraphics[width=6.5cm]{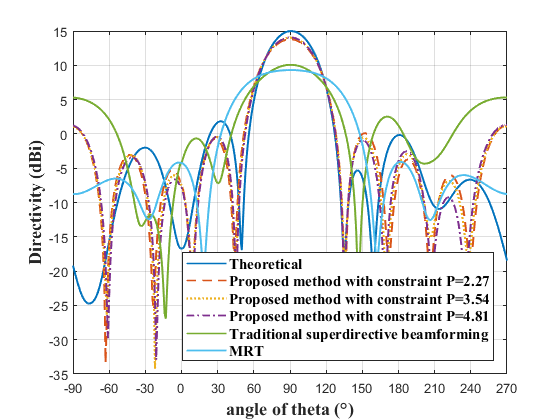}
\caption{The E-plane pattern of the $6$-element array with  $0.2\lambda$ spacing.}
\label{fig9}
\end{figure}

\begin{figure}[htbp]
\centering
\includegraphics[width=6.5cm]{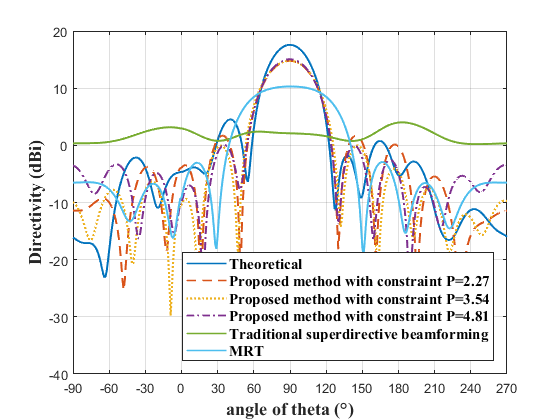}
\caption{The E-plane pattern of the $8$-element array with $0.2\lambda$ spacing.}
\label{fig10}
\end{figure}

In Fig. \ref{fig10}, the number of antennas is increased to 8 while the spacing maintains $0.2\lambda$.  The detailed results are listed in Table \ref{tab3}. It can be found that our proposed method is effective and feasible even the number of antenna increases.

\section{Conclusion}
In this paper, we derived the beamforming vector to achieve the superdirectivity, which can be calculated entirely from the electric field of the antenna array. Moreover, to alleviate the requirement requirement of the wide amplitude range for the beamforming vector, a GA-based effective algorithm is proposed to obtain a beamforming vector with a certain excitation range constraint. The simulated results showed that compared with the traditional superdirective beamforming method and the MRT, our proposed method achieves greater directivity.

\begin{appendices}
    \section{Proof of Theorem \ref{them1}}
    \label{ap1}

        For an antenna array with $M$ elements, the wave coefficient of each wave mode is the sum of that of each antenna element, which means $Q_{smn}=\sum_{i=1}^{M}b_{i}Q_{smni}$. Then the directivity of an $M$-element antenna array is 
        \begin{equation}
             D(\theta,\phi)\!=\!\frac{\left |\sum_{i=1}^{M}\!b_{i}\! \sum_{s=1\!}^{2}\sum_{n=1\!}^{\infty }\sum_{m=-n\!}^{n}\!Q_{smni}\vec{K}_{smn}\!(\theta ,\phi ) \!\right |^{2} }{ \sum_{s=1}^{2}\sum_{n=1}^{\infty }\sum_{m=-n}^{n}\left |\sum_{i=1}^{M}b_{i}Q_{smni} \right |^{2}}  .
        \end{equation}\par
        In order to simplify this expression, we transform it to the form of matrix. Let $\overline{Q}_{smni}=\begin{bmatrix}
        Q_{1,-1,1,i}&... &Q_{2,N,N,i}
        \end{bmatrix}\in \mathbb{C} ^{1\times T }  $ represent all modes of the wave coefficients of the $i$-th element, where $T=2\times N\times (N+2)$ and $N$ is the trunction point of $n$ since it has an infinite number of values \cite{SWE_derivation} \cite{Harrington1958TAP} \cite{Thal1988TAP}. Then the wave coefficient of every mode and every array element can be expressed as
        \begin{equation}
        \mathbf{Q}=\begin{bmatrix}
        Q_{1,-1,1,1} & ... &Q_{2,N,N,1} \\
        Q_{1,-1,1,2} & ... &Q_{2,N,N,2} \\
        ...& ... & ...\\
        Q_{1,-1,1,M} & ... &Q_{2,N,N,M}
        \end{bmatrix}\in \mathbb{C} ^{M\times T }.
        \end{equation}\par
        Similarly, we can also transform the far-field pattern functions into the form of the matrix
        \begin{equation}
        \mathbf{K}\!=\!\begin{bmatrix}
         K_{1,-1,1}(\theta_{1},\phi_{1})_{\theta } & ... &K_{2,N,N}(\theta_{1},\phi_{1})_{\theta } \\
         K_{1,-1,1}(\theta_{1},\phi_{1})_{\phi } & ... &K_{2,N,N}(\theta_{1},\phi_{1})_{\phi  } \\
         K_{1,-1,1}(\theta_{2},\phi_{1})_{\theta } & ... &K_{2,N,N}(\theta_{2},\phi_{1})_{\theta } \\
         K_{1,-1,1}(\theta_{2},\phi_{1})_{\phi } & ... &K_{2,N,N}(\theta_{2},\phi_{1})_{\phi } \\
         ...& ... & ...\\
         K_{1,-1,1}(\theta_{l},\phi_{q})_{\phi } & ... &K_{2,N,N}(\theta_{l},\phi_{q})_{\phi }
        \end{bmatrix}\!\in \mathbb{C} ^{ (2lq) \times T }
        \end{equation}
        where each row indicates the $\theta$ or $\phi$ component values of the far-field pattern functions of all modes at a given solid angle and $lq$ is the the number of the angle quantization points.\par
        Then the directivity expression (13) in a given direction $(\theta_{0},\phi_{0})$ can be rewritten as
        \begin{equation}
        D(\theta_{0},\phi_{0})=\frac{\mathbf{b}^{\mathrm{T}}\left [ \mathbf{Q}\mathbf{K}_{\theta_{0},\phi_{0}}^{\mathrm{T}} \right ]\left [ \mathbf{Q}\mathbf{K}_{\theta_{0},\phi_{0}}^{\mathrm{T}} \right ]^{\mathrm{H}}\mathbf{b}^{\ast} }{\mathbf{b}^{\mathrm{T}}\mathbf{Q}\mathbf{Q}^{\mathrm{H}}\mathbf{b}^{\ast}} ,
        \end{equation}
        where $\mathbf{b}=[b_{1},b_{2},\cdots,b_{M}]^{\mathrm{T}}\in \mathbb{C}^{M\times 1}$ is the beamforming vector and $\mathbf{K}_{\theta_{0},\phi_{0}}$ is the far-field pattern function in the given direction $(\theta_{0},\phi_{0})$.\par
        Moreover, if we quantify the electric field of the $i$-th antenna at the same points
        \begin{equation}
            \mathbf{e}_{i}\!=\!\begin{bmatrix}
             \bar{E}(\theta_{1},\phi_{1})_{\theta}& \bar{E}(\theta_{1},\phi_{1})_{\phi}& \cdots &\bar{E}(\theta_{l},\phi_{q})_{\phi}
            \end{bmatrix}^{T}\!\in \!\mathbb{C}^{2lq\times 1}.
        \end{equation}
        Then the SWE of the electric field in (5) becomes \cite{SWE_matrix}
        \begin{equation}
            \mathbf{e} _{i}=k\sqrt[]{\eta }\mathbf{K}\mathbf{Q}_{i}^{\mathrm{T}} ,
        \end{equation}
        where the subscript $i$ indicates the $i$-th antenna. By inserting (18) into (16), we obtain
        \begin{equation}
            D(\theta_{0},\phi_{0})=\frac{\mathbf{b}^{\mathrm{T}}\mathbf{E}_{\theta_{0},\phi_{0}}\mathbf{E}_{\theta_{0},\phi_{0}}^{\mathrm{H}}\mathbf{b}^{\ast}}{\mathbf{b}^{\mathrm{T}}\mathbf{Q}\mathbf{Q}^{\mathrm{H}}\mathbf{b}^{\ast}}\cdot \frac{1}{(k\sqrt[]{\eta } )^{2}} ,  
        \end{equation}
        where $\mathbf{E}_{\theta_{0},\phi_{0}}=[{E}_{1}(\theta_{0},\phi_{0}), {E}_{2}(\theta_{0},\phi_{0}),\cdots,{E}_{M}(\theta_{0},\phi_{0})]^{\mathrm{T}}$ is the electric field of each antenna in the given direction $(\theta_{0},\phi_{0})$.\par 
        Since the normalized far-field pattern functions are orthogonal \cite{SWE_formulation}, which means
        \begin{equation}
        \begin{split}
            \int_{0}^{2\pi }\int_{0}^{\pi} \vec{K}_{smn}(\theta,\phi)\vec{K}_{s'm'n'}(\theta,\phi)^{\ast}\mathrm{d}\theta\mathrm{d}\phi \\
            =\begin{cases}
          1& \text{ if } s=s',m=m',n=n' \\
          0& \text{ o.w } 
        \end{cases}
        \end{split}
        \end{equation}
By inserting (20), the expression (19) can be rewritten as 
        \begin{equation}
            D(\theta_{0},\phi_{0})=\frac{\mathbf{b}^{\mathrm{T}}\mathbf{E}_{\theta_{0},\phi_{0}}\mathbf{E}_{\theta_{0},\phi_{0}}^{\mathrm{H}}\mathbf{b}^{\ast}}{\mathbf{b}^{\mathrm{T}}\mathbf{E}^{\mathrm{H}}\mathbf{E}  \mathbf{b}^{\ast}}\cdot c,
        \end{equation}
        where $c$ is a constant related to the unit of the electric field and $\mathbf{E}=[\mathbf{e}_{1},\mathbf{e}_{2},...,\mathbf{e}_{M}]\in \mathbb{C}^{2ql\times M}$ is the electric field in the quantified angle of each antenna.\par
        The expression (21) has the form of a generalized Rayleigh quotient in the $M$-dimensional complex space $\mathbb{C}^{M}$. The maximization of the directivity can be obtained by the eigenvalue decomposition of the corresponding matrix \cite{Rayleigh_quotient}.
        \begin{equation}
            (\mathbf{E}^{\mathrm{H}}\mathbf{E})^{-1}(\mathbf{E}_{\theta_{0},\phi_{0}}\mathbf{E}_{\theta_{0},\phi_{0}}^{\mathrm{H}})\mathbf{x}=\lambda \mathbf{x}.
        \end{equation}
        Since $\mathbf{E}_{\theta_{0},\phi_{0}}$ represents the electric field of each array element in the given direction, which means the rank of the matrix $(\mathbf{E}_{\theta_{0},\phi_{0}}\mathbf{E}_{\theta_{0},\phi_{0}}^{\mathrm{H}})$ is 1, the above eigenvalue problem only has one solution. Let the eigenvalue of the matrix $(\mathbf{E}^{\mathrm{H}}\mathbf{E})^{-1}(\mathbf{E}_{\theta_{0},\phi_{0}}\mathbf{E}_{\theta_{0},\phi_{0}}^{\mathrm{H}})$ be $\lambda_{0}$ and the corresponding eigenvector be $\mathbf{x}_{0}$, then the maximum directivity is $D(\theta_{0},\phi_{0})_{\mathrm{max}}= \lambda_{0}\cdot c $ with beamforming vetor $\mathbf{b}=\mathbf{x}_{0}^{\ast }$. Finally, Theorem \ref{AL1} is proved.\qed
\end{appendices}

\bibliographystyle{IEEEtran}
\bibliography{ref1}
\end{document}